\definecolor{darkgreen}{rgb}{0.1,0.4,0.1}
\newtheorem{theorem}{Theorem}
\newtheorem{lemma}[theorem]{Lemma}
\newtheorem{corollary}[theorem]{Corollary}
\theoremstyle{definition}
\theoremstyle{remark}
\newtheorem{remark}[theorem]{Remark}
\def\OPT{\textup{OPT}}
\def\D{\mathcal{D}}
\newcommand{\PP}[2]{{\bf P}_{#1}\left[#2\right]}
\newcommand{\EE}[2]{{\bf E}_{#1}\left[#2\right]}
\def\eps{\epsilon}
\def\lpacss{\textup{(LP-ACSS)}}
\newenvironment{proofof}[1]{{\noindent \em Proof of #1.  }}{\hfill\qed}
\renewenvironment{proof}{{\noindent \em Proof.  }}{\hfill\qed}
\begin{document}

\title{A Rounding by Sampling Approach to the\\ Minimum Size $k$-Arc Connected Subgraph Problem}

\author{
Bundit Laekhanukit\thanks{
School of Computer Science, McGill University. Email: \protect\url{blaekh@cs.mcgill.ca}
}
\and
Shayan Oveis Gharan\thanks{
Department of Management Science and Engineering, Stanford University.
Supported by a Stanford Graduate Fellowship. Email: \protect\url{shayan@stanford.edu}.
}
\and
Mohit Singh\thanks{
McGill University and Microsoft Research, Redmond. Email: \protect\url{mohit@cs.mcgill.ca}
}
}

% \date{\today}

\maketitle

\begin{abstract}
In the $k$-arc connected subgraph problem, we are given a directed graph $G$ and an integer $k$ and the goal is the find a subgraph of minimum cost such that there are at least $k$-arc disjoint paths between any pair of vertices. We give a simple $(1+1/k)$-approximation to the unweighted variant of the problem, where all arcs of $G$ have the same cost. This improves on the $1+2/k$ approximation of Gabow et al.~\cite{GGTW09}.

Similar to the 2-approximation algorithm for this problem~\cite{FJ81}, our algorithm simply takes the union of a $k$ in-arborescence and a $k$ out-arborescence. %However, instead of selecting arbitrary arborescences
The main difference is in the selection of the two arborescences. Here, inspired by the recent applications of the rounding by sampling method (see e.g. \cite{AsadpourGMGS10,MOS11,GharanSS11,AnKS11}),
we select the arborescences randomly by sampling from a distribution on unions of $k$ arborescences that is defined based on an {\em extreme point solution} of the linear programming relaxation of the problem. In the analysis, we crucially utilize the sparsity property of the extreme point solution to upper-bound the size of the union of the sampled arborescences.

To complement the algorithm, we also show that the integrality gap of the minimum cost strongly connected subgraph problem (i.e., when $k=1$) is at least $3/2-\eps$, for any $\eps>0$. Our integrality gap instance  is inspired by the integrality gap example of the asymmetric traveling salesman problem~\cite{CGK06}, hence providing further evidence of connections between the approximability of the two problems.
\end{abstract}

\section{Introduction}

In the {\em minimum cost $k$-arc connected spanning subgraph}
(min-cost $k$-ACSS) problem, we are given a directed graph
$G=(V,A)$ with cost $c:A\rightarrow R$ on the arcs and a connectivity requirement $k$.
The goal is to find a spanning subgraph $G'=(V,A')$ of $G$ of
minimum total cost which is {\em $k$-arc connected}, i.e., every pair of vertices have at least $k$-arc disjoint paths between them. The special case of $k=1$, $1$-ACSS problem, is called the {\em minimum cost strongly connected subgraph} problem.
%To Distinguish Between The Problem On Undirected And Directed Graphs,
%We Refer To The Problem On Directed Graphs As The {\Em Minimum-Size
%(Strongly) $K$-Arc Connected Spanning Subdigraph} ($K$-Acss) Problem.
In the unweighted variant of $k$-ACSS, the {\em minimum size $k$-arc connected spanning subgraph} (min-size $k$-ACSS) problem, where all arcs of $G$ have the same cost, we want to minimize the number of arcs that we choose.

 The min-cost $k$-ACSS problem has a $2$-approximation algorithm~\cite{FJ81}, and it has been a long standing open problem  to improve this bound. Significant attention has been given to the unweighted variant of the problem. In particular, the minimum size strongly connected subgraph problem is very well studied~\cite{FJ81,KRY94,KRY96,Vetta01,ZNI03}, and  the current best approximation ratio is $3/2$, which is due to Vetta~\cite{Vetta01}. The min-size $k$-ACSS problem has been shown to be easier as $k$ increases~\cite{CT00,Gabow04,GGTW09}, and the  best approximation ratio is $1+2/k$ that is given in the work of Gabow et al.~\cite{GGTW09}.
This approximation ratio is almost tight as the min-size $k$-ACSS problem does not admit $(1+\epsilon/k)$-approximation, for some fixed $\epsilon>0$, unless P=NP~\cite{GGTW09}.
Similar to the directed case, the {\em minimum size $k$-edge connected subgraph spanning} problem, an undirected variant of the min-size $k$-ACSS problem, is known to be easier as $k$ increases, and the best known approximation ratio for this problem is $1+1/(2k)+O(1/k^2)$ due to Gabow and Gallagher~\cite{GabowG08}.

\subsection {Our Results}
In this paper, we give improved upper and lower bounds for the $k$-ACSS problem. We first show the following improved algorithms for the min-size $k$-ACSS problem.

\begin{theorem}\label{thm:main-upper}
For any $k\geq 1$, there is a $\min\{7/4,1+1/k\}$-approximation algorithm for the min-size $k$-ACSS problem. %For the min-size $1$-ACSS problem, the algorithm is a $\frac74$-approximation.
\end{theorem}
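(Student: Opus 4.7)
The plan is a rounding-by-sampling scheme. Solve the natural cut-covering LP relaxation \lpacss{} of min-size $k$-ACSS (minimize $\sum_a x_a$ subject to $x(\delta^+(S)) \geq k$ for every $\emptyset \neq S \subsetneq V$ and $0 \leq x_a \leq 1$) to obtain an extreme point $x^*$ via a max-flow separation oracle. Fix an arbitrary root $r \in V$; following Frank and Jungnickel~\cite{FJ81}, the union of any $k$ in-arborescences with any $k$ out-arborescences rooted at $r$ is automatically $k$-arc-connected, so it suffices to sample such a union $B_\text{in} \cup B_\text{out}$ from two independent distributions defined by $x^*$ and to upper-bound the expected union size.

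\textbf{Distributions and expected size.} Since $x^*$ is LP-feasible, $x^*(\delta^-(S)) \geq k$ for every $S \not\ni r$, and by the polyhedral form of Edmonds' disjoint branchings theorem we can produce $y^\text{in} \leq x^*$ with $y^\text{in}(\delta^-(v)) = k$ for every $v \neq r$ lying in the convex hull of indicators of unions of $k$ arc-disjoint in-arborescences rooted at $r$. Decomposing $y^\text{in}$ as a convex combination gives a distribution $\mu_\text{in}$ with $\PP{}{a \in B_\text{in}} = y^\text{in}_a \leq x^*_a$; an independent $\mu_\text{out}$ is built symmetrically. Each sample has exactly $k(n-1)$ arcs, so by independence
\[
\EE{}{|B_\text{in} \cup B_\text{out}|} \;=\; 2k(n-1) - \sum_a y^\text{in}_a\, y^\text{out}_a.
\]
Writing $y^\text{in} = x^* - r^\text{in}$ with $r^\text{in} \geq 0$ and $\sum_a r^\text{in}_a = L - k(n-1)$ (and analogously for $y^\text{out}$), where $L := \sum_a x^*_a$, the inequality $x^*_a \leq 1$ yields $y^\text{in}_a y^\text{out}_a \geq (x^*_a)^2 - r^\text{in}_a - r^\text{out}_a$, so $\EE{}{|B_\text{in} \cup B_\text{out}|} \leq 2L - \sum_a (x^*_a)^2$. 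Since $L \leq \opt$, the approximation ratio is at most $2 - \sum_a (x^*_a)^2 / L$, and it therefore suffices to prove $\sum_a (x^*_a)^2 \geq (1 - 1/k)\,L$.

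\textbf{The main obstacle: extreme-point sparsity.} This last lower bound is the heart of the argument, and is where the extreme-point property of $x^*$ is used crucially. A standard uncrossing argument on the tight cut constraints at $x^*$ yields a cross-free (hence laminar-like) family of size $O(n)$, which together with the tight $x_a = 1$ constraints bounds $|\text{supp}(x^*)| \leq f(k)\cdot n$ for a function $f(k)$ close to $k$. Cauchy--Schwarz then gives $\sum_a (x^*_a)^2 \geq L^2 / |\text{supp}(x^*)|$, and combined with $L \geq kn$ (from summing singleton in-cut constraints) this delivers the $1 + 1/k$ ratio for $k \geq 2$. The $k = 1$ case is harder because Cauchy--Schwarz alone is too lossy: to reach the stated $7/4$ bound we plan to refine the analysis by a combinatorial case split that exploits the detailed support structure of an extreme point, in particular the interaction between the in-tight and out-tight cuts and the size of the ``integer core'' $\{a : x^*_a = 1\}$. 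Finally, derandomization via the method of conditional expectations on the two arborescence polytopes makes the algorithm deterministic.
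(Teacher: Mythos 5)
Your algorithm is the paper's: solve \lpacss{} to get an extreme point $x^*$, decompose it into distributions over $r$-in and $r$-out $k$-arborescences with marginals at most $x^*_a$, sample independently, and output the union. Your expected-size bookkeeping leading to $\EE{}{|B_{\textup{in}}\cup B_{\textup{out}}|}\leq 2L - \sum_a (x^*_a)^2$ is also correct, and in fact the paper reaches the same bound more directly by noting that $p+q-pq = 1-(1-p)(1-q)$ is monotone in $p,q\in[0,1]$, so one can just substitute $p,q\le x^*_a$.

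The genuine gap is in the sparsity step. You propose to lower-bound $\sum_a (x^*_a)^2$ by Cauchy--Schwarz over the \emph{entire support} of $x^*$, which requires $|\textup{supp}(x^*)|\leq f(k)\,n$ with $f(k)\leq k^2/(k-1)\approx k+1$. No such bound holds. The extreme-point structure controls only the number of \emph{fractional} arcs $F=\{a : 0<x^*_a<1\}$ (Melkonian--Tardos gives $|F|\leq 4n$); the number of arcs with $x^*_a=1$ is not bounded by the cross-free family of tight cuts, and can be on the order of $kn$. With $|\textup{supp}(x^*)|$ as large as $\approx (2k+4)n$ and $L$ as small as $kn$, Cauchy--Schwarz over the support gives only $\sum (x^*_a)^2 \geq L^2/|\textup{supp}|\approx L\cdot k/(2k+4)$, far short of $(1-1/k)L$ already for $k\geq 3$. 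The missing idea is to observe that arcs with $x^*_a\in\{0,1\}$ contribute nothing to the ``waste'' $\sum_a(x^*_a-(x^*_a)^2)$, so one only needs $\sum_{a\in F}(x^*_a-(x^*_a)^2)\leq L/k$. This follows from $|F|\leq 4n$ either by Jensen on $F$ alone (as in the paper) or even more simply from $t-t^2\leq 1/4$: $\sum_{a\in F}(x^*_a-(x^*_a)^2)\leq |F|/4\leq n\leq L/k$.

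Your plan for the $7/4$ bound at $k=1$ (a ``combinatorial case split'' on tight cuts and the integer core) is much heavier machinery than what is needed and is not fleshed out enough to assess. The paper gets $7/4$ from the same quantity by pure algebra: writing $f=x^*(F)$, $L=x^*(A)$, the constraint $f\leq L$ gives $(f-f^2/4n)/L \leq 1/2 + (f-f^2/2n)/(2L)\leq 1/2 + n/(4L)\leq 3/4$, with no further structural analysis of the extreme point required.

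A minor point: the observation that the union of a $k$ in-arborescence and a $k$ out-arborescence is $k$-arc connected goes back to Frederickson and J\'aJ\'a (the paper's~\cite{FJ81}), not Frank and Jungnickel.
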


Similar to the simple 2-approximation algorithm for the minimum-cost $k$-ACSS problem, our algorithm takes the union  of  a $k$ in-arborescence and a $k$ out-arborescence.
The main difference is in the selection of the two arborescences. Here,
we select the arborescences randomly by sampling from a distribution on unions of $k$ arborescences that is defined by the linear programming relaxation of the problem. In particular,
we write a convex combination of the unions of $k$-arborescences such that the marginal probability of each arc is bounded above by its fraction in the solution of LP relaxation.

The algorithm essentially employs the rounding by sampling method that recently has been applied to various problems in the algorithm design and online optimization literature (c.f.~\cite{AsadpourGMGS10,MOS11,GharanSS11,AnKS11}), while the analysis is much simpler in our setting. Here, the main technical difference is a crucial use of the extreme point solutions of LP relaxation. In particular, because of the sparsity of the extreme point solutions, we can argue that
the union of $k$ in-arborescences and $k$ out-arborescences is not much larger than the size of the support of the LP extreme point solution and thus the size of the optimum.

Our result improves on the $(1+\frac2k)$-approximation of Gabow et al.~\cite{GGTW09} for the min-size $k$-ACSS problem, for any $k>0$. Furthermore, for the minimum size strongly connected subgraph problem, while we do not improve the approximation factor of $\frac32$~\cite{Vetta01}, our algorithm is much simpler and gives a possible direction for
%the  general weighted problem.
weighted version of the problem.

To complement the positive results, we prove that the integrality gap of the natural linear programming relaxation of the strongly connected subgraph problem is bounded below by $3/2-\eps$ for any $\epsilon >0$.
\begin{theorem}\label{thm:main-lower}
For any  $\epsilon >0$, the integrality gap of the standard  linear programming relaxation for the minimum cost strongly connected subgraph problem is at least $\frac{3}{2}-\epsilon$.
\end{theorem}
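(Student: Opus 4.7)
The plan is to construct an explicit family of instances $\{G_n\}$ whose cut-based LP relaxation for SCS has an integrality gap tending to $3/2$. Following the Charikar--Goemans--Karloff (CGK) construction for the Held--Karp relaxation of ATSP~\cite{CGK06}, which realizes a $2-\eps$ gap, I would adapt the same chained/recursive gadget construction to the strong-connectivity setting. The weaker integrality requirement of SCS---a spanning strongly connected subgraph rather than a single Hamiltonian circuit---should turn a CGK-style $2-\eps$ gap for ATSP into a $3/2-\eps$ gap here.

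More concretely, I would let $G_n=(V_n,A_n,c)$ consist of $n$ copies of a small gadget, chained or arranged cyclically, with two types of arcs: expensive arcs of cost $1$ and cheap arcs of cost $0$ (or some negligible cost). The gadget is designed so that (i)~there is a feasible fractional solution $x^\ast$ putting mass $1/2$ on each expensive arc (and appropriate values on cheap arcs) of total cost $\approx 2n$, and (ii)~any strongly connected spanning subgraph must use at least $3$ expensive arcs per gadget, yielding integer optimum $\geq 3n - o(n)$. Sending $n\to\infty$ then gives the claimed $3/2-\eps$ gap.

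The argument then splits into three steps. First, describe the gadget and specify how the $n$ copies are glued into $G_n$. Second, verify that $x^\ast$ satisfies $x^\ast(\delta^+(S))\geq 1$ and $x^\ast(\delta^-(S))\geq 1$ for every nonempty proper $S\subsetneq V_n$; by the local and symmetric structure of the gadgets this reduces to checking a bounded family of cut inequalities per gadget, and summing costs gives the upper bound on the LP value. Third, prove the integer lower bound by a charging argument: given any integer strongly connected subgraph $H$, charge each arc of $H$ to a gadget it lies in (or bridges) and argue via a local case analysis that each gadget is charged at least $3$ units.

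The main obstacle is the third step. Because strong connectivity is a global constraint, one cannot immediately localize the cost: an adversarial $H$ might share arcs between adjacent gadgets or route long cycles through many gadgets to drive the per-gadget cost below $3$. Overcoming this requires gadgets with enough combinatorial rigidity that, regardless of the global choice of $H$, the restriction of $H$ to each gadget together with its interface arcs must contain at least $3$ expensive arcs. The CGK ATSP instance offers a template for this kind of rigidity, and the adaptation amounts to replacing the tour-rigidity used in the ATSP lower bound with a strong-connectivity rigidity, while ensuring that the fractional solution still pays only $\approx 2$ per gadget.
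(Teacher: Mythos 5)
Your plan correctly identifies the right starting point (the Charikar--Goemans--Karloff ATSP construction), the right fractional solution (mass $1/2$ on each arc), and the right place where the difficulty lies (the integer lower bound). But the step you flag as ``the main obstacle'' is exactly where the proposal stops short, and the way you set things up makes that obstacle fatal rather than merely challenging.

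The specific gap: you propose a \emph{flat} construction --- $n$ copies of one fixed gadget chained or arranged cyclically, with binary (expensive/cheap) costs --- and hope to show by a local charging argument that each gadget absorbs $\geq 3$ units of cost. No such local charging works, for precisely the reason you identify: in a flat arrangement an integral solution can route around individual gadgets, paying only $2$ expensive arcs in most of them. The paper's construction is not a chain of identical gadgets but a \emph{recursion of depth $d$}: $G(d,s,t)$ contains $r$ disjoint copies of $G(d-1,\cdot,\cdot)$ wired between a source $s$ and a sink $t$, and arc costs are chosen so that each recursion level contributes a total cost of $1$ (so $c(A(G_d))=d$ and the LP pays $d/2$). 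The recursion is what supplies the rigidity you are missing. The integer lower bound $T(d)\geq\frac{3d-1}{4}-O(d/r)$ is proved by induction on $d$ via a case analysis on each column $G_d^{(i)}$: if the column is incident to $\geq 3$ level-$d$ arcs it is ``charged'' $\tfrac{3}{2}c_d(d)$ of those arcs and its interior still costs $\geq T(d-1)/r$; if it is incident to only $2$ level-$d$ arcs and both endpoints $u_i,v_i$ are touched, this can happen at most twice globally; and in the remaining degenerate case (one of $u_i,v_i$ untouched by level-$d$ arcs) the column is forced to contain \emph{all} level-$(d-1)$ arcs and the recursion continues at cost $T(d-2)/r$. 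Minimizing over the cases gives the recurrence $T(d)\geq\min\{3/4+T(d-1),\ 3/2+T(d-2)\}-3/r$, which solves to the stated bound. Nothing in your outline produces a recurrence like this, and the single-scale gadget has no analogue of Case 3.

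A secondary issue is the heuristic ``a $2-\eps$ ATSP gap should become a $3/2-\eps$ SCS gap because the integrality requirement is weaker.'' There is no transformation behind this; the constants come out of the explicit recurrence for $T(d)$ and the explicit LP value $d/2$, not from any reduction relating ATSP tours to strongly connected subgraphs. Treat the CGK instance as giving the \emph{graph and the feasible fractional point}, and expect to redo the integral lower bound from scratch, with the induction on depth as the central tool.
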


% The above result improves on previous results where integrality gap of $1.xxx$ was given.
To the best of our knowledge, there is no explicit construction that gives a lower bound on the integrality gap of the minimum cost strongly connected subgraph problem.
%The best known upper bound on the integrality gap for the min-cost $k$-ACSS problem is 2. 
Our integrality gap example builds on a similar construction for the asymmetric traveling salesman problem~\cite{CGK06} and shows stronger connections between the two problems.

\subsection{Notations}
Let $\delta_{G}^+(U):=\{(u,v)\in E:u\in U, v\in V\setminus U\}$ denote
the set of arcs leaving $U$ in a graph $G$; if  $G$ is clear
in the context,  we will skip the subscript.

A graph $G$ is {\em $k$-arc connected}
if and only if every (proper) subset of vertices $U\subset V$ have at least $k$
leaving arcs, i.e., $|\delta^+_{G}(U)|\geq k$, and 
$G$ is {\em strongly connected} if it is $1$-arc connected.
We may drop the subscript if $G$ is clear in the context.
We use the following Linear Programming relaxation for $k$-ACSS.
%For each arc $e\in A$, we have a variable $x_e\in[0,1]$ indicating
%whether the arc $e$ is chosen in the solution.
%Thus, the LP for $k$-ACSS is formulating as follows.

%\begin{table}
\[
\begin{array}{llllll}
%z^*_k=
\mbox{(LP-ACSS)~~~~} & \mbox{minimize} & ~~~~& \displaystyle\sum_{a\in A}c_ax_a \\
\\
& \mbox{subject to}
 & & x(\delta^+(U)) %\sum_{e\in\delta^+(U)}x_e
 \geq k & ~~~~&  \forall U\neq \emptyset, U\subsetneq V \\
 \\
& & & 0\leq x_a\leq 1 & & \forall e\in E,
\end{array}
%\right.
\]
\label{LP:kecss}
%\caption{The LP relaxation for $k$-ECSS/$k-ACSS}
%\end{table}
where $x(\delta^+(U)) = \sum_{a\in \delta^+(U)} x_a$. Throughout the paper $x$ will always be an optimum solution of the \lpacss .

For any vector $y:A\rightarrow \mathbb{R}$, and a set $F\subset A$ of arcs, $y(F):=\sum_{a\in F} y_a,$ is the sum of the values of the arcs in $F$, and $c(F):=\sum_{a\in F} c_a$ is the
sum of the cost of the arcs in $F$.
Also, $\chi(F)$ denotes the characteristic vector
of the set $F$, i.e., $\chi(F)_a=1$ if $a\in F$ and $\chi(F)_a=0$ otherwise.

\section{An Approximation Algorithm for Min-Size $k$-ACSS}
In this section, we prove Theorem \ref{thm:main-upper}: given a graph $G$, we give a polynomial time algorithm that finds a $k$-arc connected subgraph of $G$ such that it has no more than $\min\{1+1/k, 7/4\}$ of the arcs of the optimum solution. 
Before describing the algorithm, we need to recall some of the properties of arborescences in directed graphs.

Given a directed graph $G$ and a (root) vertex $r\in V$, an {\em $r$-out arborescence} $T$ of $G$ is a directed tree rooted at $r$ that contains a path from $r$ to every other vertex of $G$. 
An {\em $r$-out $k$-arborescence} is a subgraph $T$ of $G$ that is the union of $k$ arc-disjoint $r$-out arborescences. An {\em $r$-in arborescence} and an {\em $r$-in $k$-arborescence} are defined analogously. 
The following polyhedron plays an important role in the design and analysis of our algorithm.

\[
P^{out}=\left\{y : y(\delta^+(U)) \geq k, \quad
               \forall \emptyset\neq U\subsetneq V\setminus\{r\}, 0\leq y \leq 1
        \right\}
\]

Frank~\cite{Frank79} showed that $P^{out}$ is the up hull of the convex hull  of $r$-out $k$-arborescences (see Corollary 53.6a~\cite{schrijver}), and it can be seen that every feasible solution of (LP-ACSS) is a point in $P^{out}$.
%Gabow~\cite{Gabow91a} gave a polynomial time algorithm for writing a point $x\in P^{out}$ as a convex combination of $k$ arc-disjoint arborescences. 
Vempala and Carr~\cite{CV02} gave a polynomial-time algorithm that allows us to write a point $x\in P^{out}$ as a convex combination of $k$ arc-disjoint arborescences. Their algorithm requires a polynomial-time algorithm for finding an $r$-out $k$-arborescences~\cite{Edmonds72,Gabow91a}.

\begin{lemma}\label{lem:frank79b}~\cite{Frank79,CV02,Edmonds72,Gabow91a}
$P^{out}$ is the convex hull of subsets of $A$ containing  $r$-out $k$-arborescences. Moreover, given any fractional solution $y\in P^{out}$, there is a polynomial time algorithm that finds a convex combination of $r$-out $k$-arborescences,  $T_1,\ldots,T_l$, such that %where each $T_i$ is a union of $k$-arc disjoint $r$-arborescences and positive reals $\lambda_1,\ldots,\lambda_l$ in polynomial time such that $\sum_{i=1}^l \lambda_l = 1$ and 
$$y\geq \sum_{i=1}^l \lambda_i \chi({T_i}).$$
\end{lemma}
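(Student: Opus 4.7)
The lemma has two parts: the polyhedral characterization of $P^{out}$ and the polynomial-time decomposition procedure. The first part follows directly from Frank's theorem (cited as Corollary 53.6a in Schrijver): $P^{out}$ is the dominant of the convex hull of characteristic vectors of $r$-out $k$-arborescences. Combined with the upper box constraint $y\leq 1$, this means that $y\in P^{out}$ if and only if $y$ is a convex combination of indicator vectors of arc subsets each containing some $r$-out $k$-arborescence; equivalently, $y \geq \sum_i \lambda_i \chi(T_i)$ for a convex combination $\lambda$ over arborescences $T_i$. So the convex-hull statement and the existence of the decomposition are both immediate from Frank's theorem.

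For the algorithmic part, the plan is to invoke the Carr--Vempala framework. Consider the (exponential-size) primal LP
\[
\max \ \sum_T \lambda_T \quad \text{s.t.} \quad \sum_T \lambda_T\,\chi(T) \leq y, \ \ \lambda_T \geq 0,
\]
where $T$ ranges over all $r$-out $k$-arborescences of $G$. By the first part its optimum value equals $1$, and any optimal solution yields the required decomposition. Its dual is
\[
\min \ \sum_{a\in A} y_a z_a \quad \text{s.t.} \quad \sum_{a\in T} z_a \geq 1 \ \ \forall T, \ \ z \geq 0,
\]
and the corresponding separation problem, given $z\geq 0$, reduces to checking whether the minimum $z$-weight $r$-out $k$-arborescence has weight at least $1$. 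The latter is solvable in polynomial time by Edmonds' arborescence algorithm for $k=1$ and by Gabow's extension for general $k$. Running the ellipsoid method on the dual with this separation oracle solves both programs in polynomial time, and a basic optimal primal solution has support size at most $|A|+1$ by Carath\'eodory, giving the polynomial bound on the number of $T_i$.

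The main obstacle is recovering an \emph{explicit} primal decomposition rather than merely certifying the dual optimum. The standard fix, also the one used by Carr--Vempala, is to record the polynomially many dual constraints generated by the ellipsoid run; these correspond to a polynomial list of candidate arborescences $T_1,\ldots,T_m$. By LP duality, the primal restricted to the variables $\lambda_{T_1},\ldots,\lambda_{T_m}$ still has optimum value $1$, and since it is now a small LP with $O(|A|)$ variables and constraints, it can be solved directly to produce explicit coefficients $\lambda_i \geq 0$ with $\sum_i \lambda_i = 1$ and $y \geq \sum_i \lambda_i \chi(T_i)$, as required.
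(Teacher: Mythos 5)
The paper does not prove this lemma; it is a citation to Frank's polyhedral theorem, the Carr--Vempala decomposition framework, and the Edmonds/Gabow algorithms for the min-weight $k$-arborescence problem, which is exactly the chain of results your proof reconstructs. Your argument is correct and takes the same approach the paper implicitly relies on, with the details (primal/dual LP pair, separation oracle, recovery of an explicit convex combination from the ellipsoid run) filled in appropriately.
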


The above lemma holds analogously for the $r$-in arborescences. Now, since $x\in P^{out}$, we can write a distribution of $r$-out(in) $k$-arborescences 
such that probability of each arc $a\in A$ chosen in a random $k$-arborescence is bounded above by $x_a$:
\begin{corollary}
\label{cor:arbsampling}
There are  distributions $\D_{in}(r)$ and $\D_{out}(r)$ of $r$-in $k$-arborescences and $r$-out $k$-arborescences, such that the marginal value of each arc $a\in A$ is bounded above by $x_a$, i.e., for all arcs $a\in A$,
\begin{eqnarray*}
\PP{T\sim \D_{in}(r)}{a\in T} &\leq & x_a,\\
\PP{T\sim \D_{out}(r)}{a\in T} &\leq & x_a.
\end{eqnarray*}
Moreover, these distributions can be computed in polynomial time. 
\end{corollary}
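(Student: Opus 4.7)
The plan is a direct application of Lemma \ref{lem:frank79b} rephrased in probabilistic language. The main task reduces to verifying that the LP optimum $x$ lies in the polyhedron $P^{out}$; once this is known, Lemma \ref{lem:frank79b} immediately supplies a convex combination of $r$-out $k$-arborescences dominated by $x$, which is already the desired distribution (re-interpreted as probabilities).

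For the first step, I would observe that since $x$ is feasible for \lpacss, it satisfies $x(\delta^+(U)) \geq k$ for every nonempty proper subset $U \subsetneq V$ together with $0 \leq x \leq 1$. In particular, the cut inequality holds for every $U$ with $\emptyset \neq U \subsetneq V \setminus \{r\}$, so all defining constraints of $P^{out}$ are met and $x \in P^{out}$.

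For the second step, I would apply Lemma \ref{lem:frank79b} with $y := x$ to obtain $r$-out $k$-arborescences $T_1, \ldots, T_l$ and coefficients $\lambda_1, \ldots, \lambda_l \geq 0$ with $\sum_{i=1}^l \lambda_i = 1$ and $x \geq \sum_{i=1}^l \lambda_i \chi(T_i)$ coordinatewise. Define $\D_{out}(r)$ to be the distribution that picks $T_i$ with probability $\lambda_i$. Then for any arc $a \in A$,
\[
\PP{T \sim \D_{out}(r)}{a \in T} \;=\; \sum_{i : a \in T_i} \lambda_i \;=\; \sum_{i=1}^l \lambda_i\, \chi(T_i)_a \;\leq\; x_a,
\]
as required. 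Polynomial-time computability is inherited directly from the algorithmic part of Lemma \ref{lem:frank79b}. The construction of $\D_{in}(r)$ is identical after replacing $r$-out $k$-arborescences by $r$-in $k$-arborescences and invoking the in-version of the lemma noted in the text just after its statement.

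I do not expect any real obstacle here --- the corollary is essentially a probabilistic reformulation of Lemma \ref{lem:frank79b}. The only minor point worth confirming is that the $\lambda_i$ form a bona fide probability distribution (i.e., sum to one), which is exactly what being a convex combination guarantees, and that the coordinatewise domination $x \geq \sum_i \lambda_i \chi(T_i)$ translates into the marginal bound on arc probabilities.
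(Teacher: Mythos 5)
Your proof matches the paper's reasoning exactly: the paper also derives the corollary by observing that any feasible solution of (LP-ACSS) lies in $P^{out}$ (and analogously in $P^{in}$), then applying Lemma~\ref{lem:frank79b} to obtain the convex combination and reading the coefficients $\lambda_i$ as probabilities. No gaps; this is correct and takes the same route as the paper.
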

%and a convex combination  and thus we can find $k$-arc disjoint $r$-arborescences $T_1,\ldots,T_l$ and positive reals $\lambda_1,\ldots,\lambda_l$ such that
%
%
%By Carath\'{e}odory's theorem~\cite{Schrijver-IPbook}, any
%vector $\x\in{P^{out}}$ is contained in a convex hull generated by
%$|E|+1$ affinely independent vectors in $P^{out}$.
%Thus, we can write $x$ as a convex combination of $\ell=|E|+1$
%$k$-out arborescences $T_1,T_2,\ldots,T_{\ell}$.
%{\bf ..Moreover, these arborescences can be found in polynomial time..}
%%(Note that we abuse $T_i$ to mean its characteristic vector.)

%\[
%y\geq \lambda_1\chi(T_1) + \lambda_2\chi(T_2) + \ldots + \lambda_\ell\chi(T_\ell)
%\mbox{ where } \lambda_t\geq 0 \mbox{ for all $t=1,2,\ldots,\ell$}
%\mbox{ and } \sum_t\lambda_t=1.
%\]
%
%
%
%
%Analogously, we can find
% $k$ arc disjoint $r$-in arborescences $S_1,S_2,\ldots,S_p$ and reals $\mu_1,\ldots,\mu_p$ such that
%\[
%\x\geq \mu_1\chi(S_1) + \mu_2\chi(S_2) + \ldots + \mu_p\chi(S_p)
%\mbox{ where } \mu_t\geq 0 \mbox{ for all $t=1,2,\ldots,p$}
%\mbox{ and } \sum_t\lambda_t=1.
%\]

Now, we are ready to describe our algorithm. We sample $k$-arborescences $T_{in}$ and $T_{out}$ independently from $\D_{in}$ and $\D_{out}$, respectively, and we then return $T_{in}\cup T_{out}$ as an output. The details are described in Algorithm \ref{alg:main-algorithm}.

\begin{algorithm}[htb]
\caption{Approximation Algorithm for Min-Size $k$-ACSS}
\label{alg:main-algorithm}
\begin{algorithmic}[1]

\STATE Solve \lpacss~to get an optimum \emph{extreme point} solution $x$.
%\item Find $k$-arc disjoint $r$-out arborescences $T_i$, $1\leq i \leq \ell$ and multipliers $\lambda_i\geq 0$ such that

%$
%\x\geq \lambda_1\chi(T_1) + \lambda_2\chi(T_2) + \ldots + \lambda_\ell\chi(T_\ell)
%\mbox{ where } \lambda_t\geq 0 \mbox{ for all $t=1,\ldots,\ell$}
%\mbox{ and } \sum_t\lambda_t=1.
%$
%\item Find $k$-arc disjoint $r$-in arborescences $S_i$, $1\leq i \leq p$ and multipliers $\lambda_i\geq 0$ such that

%$
%\x\geq \mu_1\chi(S_1) + \mu_2\chi(S_2) + \ldots + \mu_p\chi(S_p)
%\mbox{ where } \mu_t\geq 0 \mbox{ for all $t=1,\ldots,p$}
%\mbox{ and } \sum_t\mu_t=1.
%$
\STATE Find distributions $\D_{in}(r)$ and $\D_{out}(r)$ on $r$-in and $r$-out $k$-arborescences, respectively, such that the marginal value of each arc $a\in A$ is bounded above by $x_a$.%, as promised in Corollary \ref{cor:arbsampling}.
\STATE Sample an $r$-in $k$-arborescence $T_{in}$ from $\D_{in}(r)$ and an $r$-out $k$-arborescence $T_{out}$, {\em independently}, from $\D_{out}(r)$.
% where $T=T_i$ with probability $\lambda_i$ and independently, sample an $k$-in arborescence $S$ where $S=S_j$ with probability $\mu_j$.

\RETURN  $T_{in}\cup T_{out}$.
\end{algorithmic}
\end{algorithm}

%
%We pick one $k$-out arborescence $T_i$ with probability
%$\frac{\lambda_i}{\sum_{t=1}^\ell{\lambda_t}}$ and
%pick one $k$-in arborescence $S_j$ with probability
%$\frac{\mu_j}{\sum_{t=1}^\ell{\mu_t}}$.
%Then the solution to $k$-ACSS is $G'=T_i\cup S_j$.
%%

Next, we show that the approximation ratio of the above algorithm is no more than $1+1/k$.
%We claim that the expected size of $S\cup T$ is at most $(1+1/k)\opt$, where $\opt$
%is the size of the minimum $k$-arc connected spanning subgraph.

\begin{theorem}
For any directed graph $G$, Algorithm \ref{alg:main-algorithm} always produces a $k$-arc connected subgraph of $G$ such that the expected size of the solution is no more than $\min\{7/4,1+1/k\}$ of the optimum.
%approximation algorithm for the minimum $k$-arc connected subgraph problem on minimum~size graphs.
%The randomized approximation algorithm given in Figure~\ref{fig:algorithm} for the minimum-size
%$k$-arc connected spanning subgraph problem with the expected
%approximation ratio of $(1+1/k)$. For $k=1$, the approximation factor is $\frac{7}{4}$.
\end{theorem}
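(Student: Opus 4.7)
The plan is to verify feasibility first and then bound the expected number of arcs using the marginal bound of Corollary~\ref{cor:arbsampling} together with the sparsity of extreme points of \lpacss. For \emph{feasibility}, fix any proper set $\emptyset\neq U\subsetneq V$. Either $r\in U$, in which case $T_{out}$ contains $k$ arc-disjoint $r\to v$ paths for every $v\in V\setminus U$ (Edmonds' branching theorem applied to the $r$-out $k$-arborescence), so $|\delta^+_{T_{out}}(U)|\geq k$; or $r\in V\setminus U$, in which case $T_{in}$ plays the symmetric role. Either way $T_{in}\cup T_{out}$ is $k$-arc connected in every realization.

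For the expected size, let $p_a=\PP{}{a\in T_{in}}$ and $q_a=\PP{}{a\in T_{out}}$; by Corollary~\ref{cor:arbsampling}, $p_a,q_a\leq x_a$, and since the two arborescences are sampled independently,
\[
\EE{}{|T_{in}\cup T_{out}|}\;=\;\sum_a\bigl(p_a+q_a-p_aq_a\bigr).
\]
For arcs with $x_a=1$ I bound the summand by $1=x_a$; for arcs with $0<x_a<1$ I use the cruder estimate $p_a+q_a\leq 2x_a$. Combining, and writing $LP=\sum_a x_a\leq\OPT$,
\[
\EE{}{|T_{in}\cup T_{out}|}\;\leq\;LP + \sum_{0<x_a<1} x_a\;\leq\;\OPT+|\{a:0<x_a<1\}|.
\]
To obtain the ratio $1+1/k$, I would now invoke \emph{extreme-point sparsity}: because $x(\delta^+(\cdot))$ is submodular, uncrossing the tight cut constraints of \lpacss yields a cross-free (within-support laminar) family whose size controls the number of fractional arcs. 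Together with the trivial lower bound $\OPT\geq kn$ (each vertex needs in-degree at least $k$), a sparsity bound of $n-1$ on the fractional support would give $|\{a:0<x_a<1\}|\leq\OPT/k$ and hence the factor $1+1/k$.

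The $7/4$ bound is the binding regime only for $k=1$ (strong connectivity), where the $1/k$ savings are too weak. Here I would exploit the cancellation $-p_aq_a$ more carefully: writing $p_a+q_a-p_aq_a=1-(1-p_a)(1-q_a)$ and combining a Cauchy-Schwarz / concavity argument with the bounded support of the extreme point should force $\sum_a p_aq_a=\Omega(n)$, which combined with $\OPT\geq n$ produces the $7/4$ factor. The main obstacle I anticipate is the sparsity step: directed cuts are not self-dual, so $\delta^+(U)$ and $\delta^+(V\setminus U)=\delta^-(U)$ can both be tight at the same extreme point, and a naive uncrossing produces two laminar families for a bound of $2(n-1)$ fractional arcs, which only reproduces the Gabow et al.\ factor $(1+2/k)$. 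Sharpening this to the single-laminar bound $n-1$---essentially by playing the in-cut and out-cut laminar families off against each other on the support of $x$---is the crux that distinguishes this analysis from the earlier work.
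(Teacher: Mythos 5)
Your feasibility argument is correct and matches the paper's, and your starting identity $\EE{}{|T_{in}\cup T_{out}|}=\sum_a(p_a+q_a-p_aq_a)$ with $p_a,q_a\leq x_a$ is exactly the paper's. The gap is in the arithmetic that follows. By discarding the product term $-p_aq_a$ on the fractional arcs and bounding by $2x_a$ there, you arrive at $\OPT + |F|$ where $F=\{a:0<x_a<1\}$, and to get $1+1/k$ you would need a sparsity bound of roughly $|F|\leq n$. But the sparsity bound that is actually available for this LP (Melkonian--Tardos, which the paper cites) is $|F|\leq 4n$, not $n-1$; you yourself anticipate that uncrossing yields two cross-free families and about $2(n-1)$ tight constraints, and you offer no concrete mechanism to ``play the in-cut and out-cut laminar families off against each other'' to halve this. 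With $|F|\leq 4n$ and $\OPT\geq nk$, your crude bound gives only $1+4/k$ --- worse than Gabow et al.

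The paper's fix is not a sharper sparsity bound but a sharper use of the quadratic term you dropped. It keeps the bound $\EE{}{|T_{in}\cup T_{out}|}\leq 2\sum_a x_a-\sum_a x_a^2$, observes that $x_a^2=x_a$ on integral arcs so the ``excess'' over the LP value is $x(F)-\sum_{a\in F}x_a^2$, and applies Jensen (concavity of $-t^2$) to get $\sum_{a\in F}x_a^2\geq x(F)^2/|F|$. Plugging in $|F|\leq 4n$ and maximizing $x(F)-x(F)^2/4n$ over $x(F)$ (maximum $n$, attained at $x(F)=2n$) gives an excess of at most $n$, and dividing by $x(A)\geq nk$ yields $1/k$. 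In other words, the $4n$ sparsity bound suffices precisely because the subtracted $x(F)^2/|F|$ term is retained; it is the interplay between the quadratic savings from independent sampling and the sparsity of the extreme point, not an improved laminar count, that produces the $1+1/k$ factor. The $7/4$ bound then comes from additionally using $x(F)\leq x(A)$ in the same expression, again nothing beyond $4n$ sparsity. Your instinct that the $-p_aq_a$ cancellation matters is right for the $7/4$ case, but you need it for the $1+1/k$ case too, and the Jensen step is the missing ingredient.
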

\begin{proof}
First, we show that the union of any pair of $r$-in and $r$-out $k$-arborescences is $k$-arc connected. 
%This is because each $k$-arborescence is a union of $k$ disjoint arborescences and the union of any pair of $r$-in arborescence and $r$-out arborescence is strongly connected.
%Therefore, the union of a pair of $r$-in and $r$-out $k$-arborescences can be seen as a union of $k$ disjoint strongly connected graphs, thus it is $k$-arc connected.
%We first claim $T_i\cup S_j$ is $k$-arc connected for any $i$ and $j$ and thus the solution returned by the algorithm is always feasible. 
Let $T_{in}(T_{out})$ be a $r$-in ($r$-out) $k$-arborescence, and $H=T_{in}\cup T_{out}$. 
Since both $T_{in}$ and $T_{out}$ are unions of $k$ arc-disjoint arborescences, there are $k$ arc-disjoint paths from each of the vertices to $r$ and $k$ arc-disjoint paths from $r$ to each of the vertices.
%Thus,  removing any set of $k-1$ edges from $H$ 
%still there are directed paths from all vertices to $r$ and from $r$ to all of the vertices. 
Therefore,  $H$ remains strongly connected after removing any set of $k-1$ arcs. %still it is Fix a pair of vertices  $u,v\in V$.
%By the definition $T_{in}$, $u$ is connected by $k$ arc disjoint to $r$ i and $k$-arc disjoint
%$(s,r)$-paths in $S_j$. By Menger's theorem, the minimum $(r,t)$ cut and $(s,r)$ are both of size at least $k$. Removing any set of arcs of size at most $k-1$ would leave at least one $(s,r)$ path and one $(r,t)$ path and therefore a $(s,t)$ path. Thus, applying
Hence, $H$ is $k$-arc connected. 
%It only remains to analyze the expected size of $H=T\cup S$.
%Let $\ALG$ be the solution produced by the algorithm, and $\OPT$ be optimum solution. 

It remains to show that the expected size of the solution  is no more than $\min\{1+1/k,7/4\}$ of the optimum, i.e.,
$$ \frac{\EE{T_{in} \sim \D_{in}(r), T_{out}\sim \D_{out}(r)}{\left|T_{in} \cup T_{out}\right|}}{|\OPT|} \leq \min\left\{\frac74,1+\frac{1}{k}\right\}.$$

To simplify the notation, we will skip the subscript and write $\EE{}{\left|T_{in} \cup T_{out}\right|}$ to mean $\EE{T_{in} \sim \D_{in}(r), T_{out}\sim \D_{out}(r)}{\left|T_{in} \cup T_{out}\right|}$. Similarly, we will skip the subscripts for $\PP{T_{in}\sim \D_{in}(r)}{a\in T_{in}}$ and $\PP{T_{out}\sim \D_{out}(r)}{a\in T_{out}}$.

Since $T_{in}$ and $T_{out}$ are chosen independently,
\begin{eqnarray*}
% \EE{T_{in} \sim D_{in}(r), T_{out}\sim D_{out}(r)}{\left|T_{in} \cup T_{out}\right|} &=& \sum_{a\in A} \left\{\PP{T_{in}\sim \D_{in}(r)}{a\in T_{in}}+\PP{T_{out}\sim \D_{out}(r)}{a\in T_{out}} \right. \\
% &&~~~- \left.\PP{T_{in}\sim \D_{in}(r)}{a\in T_{in}}\cdot\PP{T_{out}\sim \D_{out}(r)}{a\in T_{out}}\right\}\\
\EE{}{\left|T_{in} \cup T_{out}\right|} &=& \sum_{a\in A} \left\{\PP{}{a\in T_{in}}+\PP{}{a\in T_{out}} 
- \PP{}{a\in T_{in}}\cdot\PP{}{a\in T_{out}}\right\}\\
&\leq & \sum_{a\in A} 2x_a - \sum_{a\in A} x_a^2.
\end{eqnarray*}

The last inequality follows from Corollary \ref{cor:arbsampling} and the fact that $x_a\leq 1$ for all $a\in A$. Let $F:=\{a :  0 < x_a < 1\}$ be the set of the fractional arcs (i.e., set of arcs with non-integer values in the solution of \lpacss). Since $x$ is an optimal solution of \lpacss, $|\OPT|\geq \sum_{a\in A} x_a$. Therefore,
\begin{eqnarray}
% \frac{\EE{T_{in} \sim \D_{in}(r), T_{out}\sim \D_{out}(r)}{\left|T_{in} \cup T_{out}\right|}}{|\OPT|} &\leq& 1 + \frac{\sum_{a\in A} x_a - \sum_{a\in A} x_a^2}{\sum_{a\in A} x_a}\nonumber \\
\frac{\EE{}{\left|T_{in} \cup T_{out}\right|}}{|\OPT|} &\leq& 1 + \frac{\sum_{a\in A} x_a - \sum_{a\in A} x_a^2}{\sum_{a\in A} x_a}\nonumber \\
&= & 1+ \frac{x(F) - \sum_{a\in F} x_a^2}{x(A)}\nonumber \\
&\leq & 1+ \frac{x(F) - x(F)^2/|F|}{x(A)}, \label{eq:ratioupperbound}
\end{eqnarray}
where the last inequality follows from Jenson's inequality and the fact that $f(t)=-t^2$ is a concave function. 

Since $x$ is an extreme point solution of \lpacss,~$x$ is a sparse vector.  It follows from the work of Melkonian and Tardos~\cite{MK04} (see also \cite{GGTW09}), that the number of fractional
arcs, $|F|$, is no more than $4n$. Hence,  
\begin{equation}
\label{eq:kbound} \frac{x(F) - x(F)^2/|F|}{x(A)} \leq \frac{x(F) - x(F)^2/4n}{x(A)} \leq \frac{n}{x(A)} \leq \frac{1}{k},
\end{equation}
where the second inequality follows since $x(F)-x(F)^2/4n$ attains its maximum at $x(F)=2n$, and the last inequality follows from the fact that $x(A) = \sum_{v\in V} x(\delta^+(v)) \geq nk$.
On the other hand, since $x(F)\leq x(A)$, we get
\begin{eqnarray}
\label{eq:1bound}
\frac{x(F) - x(F)^2/|F|}{x(A)} \leq \frac12 + \frac{x(F) - x(F)^2/2n}{2x(A)} \leq \frac12 + \frac{n}{4x(A)} \leq \frac34.
\end{eqnarray}
The theorem simply follows by putting  equations \eqref{eq:ratioupperbound},\eqref{eq:kbound},\eqref{eq:1bound} together.
%Since $x(\delta^{out}(v))\geq k$ for each vertex $v$, we have that $\opt\geq \sum_{e\in E}x_e\geq kn.$
%Therefore, $\E[|T_i\cup S_j|]=(1+1/k)\opt$, proving the bound for general $k$.
%For the case $k=1$, we obtain a improved bound of $\frac{7}{4}$.
%This is because $|T|=|S|=n-1$ and $\sum_{e_\in E}x_e\geq n$.
%Thus, $|T\cup S|\leq 2n$ and
%$\sum_{e\in E}x_e\geq\max\{n,\alpha\}$.
%Thus the approximation ratio is
%$\frac{\sum_{e\in E}x_e+\alpha-\alpha^2/\beta}{\sum_{e\in E}x_e}\}$,
%where $\beta\leq 4n$.
%For $ \alpha \geq n$, the term is at most
%$2-\frac{\alpha}{\beta}$ using the fact $\sum_{e\in E}x_e\geq \alpha$. Since, $\beta \leq 4n$, we again obtain that approximation factor is at most $\frac{7}{4}$. For $\alpha\leq n$,
%the term is bounded by $g(\alpha)=1+\frac{\alpha}{n}-\frac{\alpha^2}{4n^2}$ using the fact that $\sum_{e\in E} x_e\geq n$. The function $g(\alpha)$ is a decreasing function of $\alpha$ in the range $[0,n]$ and is bounded from above by $g(n)=\frac{7}{4}$ in this range. Thus, the approximation ratio is $7/4$.
\end{proof}

\medskip

%{\bf Remark:}
\begin{remark}
Since the distributions $\D_{in}(r)$ and $\D_{out}(r)$ can be constructed such that the support of each distribution has size only polynomially large in $n$,  the algorithm can be derandomized simply by choosing a pair of $k$-arborescences that have the minimum number of arcs in their union.
%$|T_i\cup S_j|$. This gives a deterministic algorithm with the same
%approximation ratio.
\end{remark}

\section{A Lower Bound on the Integrality Gap}

In this section, we prove Theorem \ref{thm:main-lower}: we show a lower-bound of $1.5-\epsilon$, for any
arbitrary small $\epsilon>0$, on the integrality gap  of  \lpacss~for $k=1$. %the
%natural LP-relaxation for the {\em weighted} version of $k$-ACSS,
%called the min-cost $k$-ACSS problem.
%We first give the construction and the analysis for the case
%$k=1$. This special case is called the minimum-size
%{\em strongly connected spanning subgraph} (SCSS) problem.
%Then we will extend the construction to the case of general $k$.
Our construction is based on the LP-gap construction of
the asymmetric traveling saleman problem by Charikar,
Goemans and Karloff~\cite{CGK06}.

% \subsection{Linear Programming Relaxation}

% The following is the natural LP relaxation for SCSS.

% %\begin{table}
% \[
% \mbox{(LP)}\left\{
% \begin{array}{cll}
% z^*_k=\min & \displaystyle\sum_{e\in E}c_ex_e \\
% \mbox{s.t.}
%  & \sum_{e\in\delta^+(U)}x_e \geq 1 & \forall \emptyset\neq U\subsetneq V \\
% % & \sum_{e\in\delta^-(U)}x_e \geq 1 & \forall \emptyset\neq U\subsetneq V \\
%  & 0\leq x_e\leq 1 & \forall e\in E
% \end{array}
% \right.
% \]
% \label{LP:stronglyconn}
% %\caption{The LP relaxation of the minimum cost strongly connected
% %  subgraph problem}
% %\end{table}

\subsection{Construction}
Let $r>0$ be an integral parameter that will be defined later.
We start by defining the integrality gap example, $G(d,s,t)$, by a recursive construction
of depth $d$.  In any graph $G(d,s,t)$, $d$ is the {\em depth},
$r$ is the number of {\em columns}, $s,t$ are the {\em source, sink} vertices, respectively.
We allow $s$ and $t$ to be the same vertex. We will construct $G(d,s,t)$ inductively such that it contains exactly $r$ copies of
$G(d-1,.,.)$.

We start by describing $G(1,s,t)$. The graph consists of $s,t$ and $r$ distinct vertices
$v_1,\ldots,v_{r}$. Let  $v_0=s$ and $v_{r+1}=t$; note that $v_0$ and $v_{r+1}$ may be the same depending on the
given parameters $s$ and $t$.
For any $1 \leq i\leq r+1$, we include arcs $(v_i,v_{i-1})$ and
$(v_{i-1},v_i)$ in $G(1,s,t)$. Therefore,
$$ A(G(1,s,t)) := \{ (v_{i-1}, v_i), (v_{i}, v_{i-1}), 1\leq i\leq r+1\}.$$

Next, we define $G(d,s,t)$. %For $d>1$, we define $G(d,s,t)$ recursively as follows.
The graph consists of $s,t$ and $r$ distinct copies of $G(d-1,.,.)$. In particular, let
$v_1,\ldots,v_{r},u_1,\ldots,u_{r}$ be $2r$ distinct vertices, and
$v_0=u_{r+1}=s$ and $v_{r+1}=u_0=t$.
For any $1\leq i\leq r$, include a distinct copy of $G(d-1,.,.)$ with source $u_i$ and sink $v_i$. Also,
 for any $1\leq i\leq r+1$, include the arcs $(v_i,v_{i-1})$ and
$(u_{i-1},u_i)$. Therefore,
$$ A(G(d,s,t)) := \{ (u_{i-1}, u_i), (v_{i}, v_{i-1}), 1\leq i\leq r+1\} \cup \left\{\bigcup_{i=1}^r A(G(d-1,u_i,v_i))\right\}. $$
 Figure \ref{fig:construction} illustrates the graph $G(3,s,s)$ for $r=3$.

Our  integrality gap example is $G_d:=G(d,s,s)$, where the source and the sink are unified.
The {\em $i^{th}$ column} of $G_d$ is defined to be the  $i^{th}$ copy of the $G(d-1,.,.)$,  i.e., $G_{d}^{(i)}:=G(d-1,u_i,v_i)$.
The set of arcs that connect the $r$ columns with $s$ and $t$, i.e.,  $A(G_d)\setminus\bigcup_{i=1}^rA(G_{d}^{(i)})$, are denoted
by {\em  $d^{th}$ level arcs}.
%Observe that the set of $d^{th}$ level arcs of $G_d$ is
Similarly, the $l^{th}$ level arcs of $G_d$ are defined to be set of arcs included at the
$l^{th}$ level of induction. %, i.e., all arcs defined
%with parameter depth $d=j$.
For example, the $(d-1)^{th}$ level arcs of $G_d$ are %the union of the set of arcs that connect the columns of each subgraph $G^i_{d-1}
$\bigcup_{i=1}^{r}\left(A(G_{d}^{(i)})\setminus
\bigcup_{j=1}^rA(G_{d}^{(i;j)})\right)$, where $G_{d}^{(i;j)}$ is the
$j^{th}$ column of $G_{d}^{(i)}$.

We define the costs of the arcs of $G_d$ such that, for any $1\leq l\leq d$, the total cost of the arcs at level $l$ is equal to $1$. In other words, 
the cost of each arc at level $l$, $c_d(l)$, is  the reciprocal of the number of arcs at level $l$.
 By the construction of $G_d$, we have
\begin{eqnarray}
\label{eq:costfunction}
c_d(l) := \frac{1}{2(r+1) r^{d-l}}.
\end{eqnarray}
%$c_j$ be the cost of each arc of the $j$th level of $G_d$.
%We set the cost of arcs of $G_d$ so that
%$\sum c_1=\sum c_2=\ldots=\sum c_d=C$, where $C$ is any positive real
%number.

\begin{figure}
\begin{center}
\includegraphics[width=4.8in]{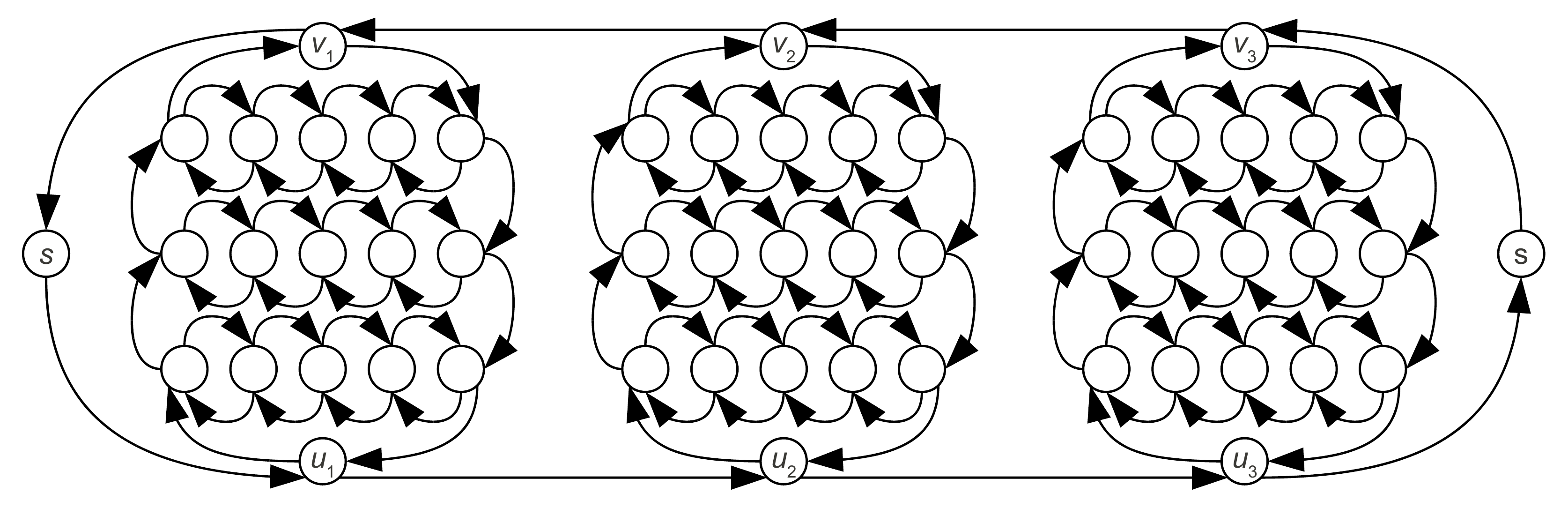}
\end{center}
\label{fig:construction}
\caption{An illustration of the graph $G(3,s,s)$, for $r=3$. Note that the vertices
  labeled ``s'' on the left and on the right are the same.
 }
\end{figure}

\subsection{Lower Bounding the Integrality Gap}
We show that for any $d>0$, and for a sufficiently large $r$, the integrality gap of the instance $G(d,s,s)$ is at least $3/2 - O(1/d)$.
\begin{theorem}
\label{thm:gapexample}
For any $d>0$ and $r\geq d$, the integrality gap of the instance $G(d,s,s)$ is at least $3/2 -8/d$.
\end{theorem}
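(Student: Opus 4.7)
The plan is to establish the integrality gap by exhibiting a feasible fractional solution to \lpacss~on $G(d,s,s)$ of cost close to $\tfrac{2}{3}d$, and showing in parallel that every integer strongly connected subgraph has cost close to $d$ (the total cost of all arcs in $G_d$). The ratio of these two quantities will yield the claimed bound of $\tfrac{3}{2} - 8/d$.

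For the LP upper bound, I would construct an explicit fractional $x^*$ recursively on the depth of $G_d$. The construction exploits the nested structure: the level-$d$ arcs form two cycles through $s$ (one via the $u_i$'s and one via the $v_i$'s), and inside each of the $r$ columns one has a scaled copy of the same structure one level down. At each level, the fractional values are chosen so that cut constraints involving the ``boundary'' vertices (the sources $u_i$ and sinks $v_i$ of columns) are met by distributing coverage over multiple arcs at deeper, cheaper levels---each level deeper costs a factor of $r$ less per arc by the formula $c_d(l) = 1/(2(r+1)r^{d-l})$. By a recursive computation, I would show $\mathrm{LP}(G_d) \leq \tfrac{2}{3} d + O(1)$, where the $O(1)$ slack comes from the bottom level~$1$, at which coverage can no longer be pushed further down and must be paid in full.

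For the IP lower bound, I would argue that any integer strongly connected subgraph $H \subseteq G_d$ has cost at least $d - O(1)$. The key point is that, unlike a fractional LP solution, $H$ must contain \emph{actual} arc paths between every pair of vertices. A large fraction of the arcs at each level are forced in $H$---any unique in-arc or unique out-arc of a vertex must belong to $H$---and the remaining boundary arcs must be chosen integrally: the integer solution cannot split coverage across two cheaper fractional arcs as the LP does. Summing the forced arcs and the mandatory boundary arcs level by level yields the $d - O(1)$ bound. Combining with the LP upper bound gives the gap $\frac{d - O(1)}{\tfrac{2}{3}d + O(1)} \geq \tfrac{3}{2} - O(1/d)$, and the constant $8$ comes from careful bookkeeping at levels~$1$ and~$d$.

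The main obstacle will be the recursive analysis itself. For the LP, the challenge is to verify that the fractional assignment satisfies \emph{every} cut constraint on $G_d$, not just local ones, which requires a careful inductive statement that is strong enough to compose across the $r$ columns and $d$ levels. For the IP, the delicate part is ruling out clever integer solutions that reuse arcs across levels; this will amount to showing that each non-forced boundary cut at each column contributes an integer cost which the LP avoids. The assumption $r \geq d$ enters in ensuring that lower-level arcs, cheaper by a factor of $r^{d-l}$, contribute only negligible error in both directions of the estimate.
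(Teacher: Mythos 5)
Your high-level plan — bound the LP value from above, bound the integer optimum from below, and take the ratio — is the same as the paper's, but both of your quantitative claims are off, and the error in the IP lower bound is fatal.

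For the LP upper bound, you propose a recursive construction giving $\mathrm{LP}(G_d) \leq \tfrac{2}{3}d + O(1)$. This is both harder and weaker than what actually holds. The paper simply sets $x^*_a = 1/2$ on every arc; since $G_d$ is $2$-arc connected (indeed Charikar et al.~show $x^*$ lies in the Held--Karp polytope), this is feasible for \lpacss\ with $k=1$, and its cost is $\tfrac{1}{2}c(A(G_d)) = d/2$. There is no need to carefully distribute coverage over levels; the uniform half solution works outright.

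For the IP lower bound, the claim $\OPT \geq d - O(1)$ is false for this instance, and this is where the proposal breaks down. Combined with the (true) bound $\mathrm{LP} \leq d/2$, your claim would yield an integrality gap of $2 - O(1/d)$, which overshoots the $3/2-\eps$ the theorem asserts. The issue is that a strongly connected \emph{subgraph} is much more permissive than a tour: one does not need all (or almost all) arcs at each level. Concretely, most vertices of $G_d$ — including the boundary vertices $u_i, v_i$ of each column — have in-degree and out-degree at least $2$, so there are essentially no ``forced'' arcs in the sense you describe, and a strongly connected subgraph can systematically discard about a quarter of the arcs at every level. The right bound is $T(d) \geq \tfrac{3d-1}{4} - \tfrac{3d}{r}$, and proving it requires the careful case analysis the paper carries out: classify each column $H^{(i)}_d$ by how many of its four potential level-$d$ arcs $A_d(i)$ are present ($\geq 3$; exactly $2$ hitting both $u_i$ and $v_i$; or exactly $2$ missing one of $u_i, v_i$), show at most two columns can fall in the second case, derive a recursion $T(d) \gtrsim \min\{\tfrac{3}{4} + T(d-1),\ \tfrac{3}{2} + T(d-2)\} - 3/r$, and unroll. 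The $\tfrac{3}{4}$ per level — not $1$ — is exactly what pairs with the $\tfrac{1}{2}$ per level from the LP to give the ratio $3/2$. Your proposal, by analogy with the ATSP setting where the optimum \emph{tour} does cost $\approx d$, mistakenly imports that bound into a setting where a cheaper, non-tour subgraph exists.
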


First, we show that the optimal value of the LP is at most $d/2$.
Define $x^*_a:=1/2$ for all arcs $a\in A(G_d)$. Charikar et al.~\cite{CGK06} show that $x^*$ belongs to the Held-Karp relaxation polytope \cite{HK70}. Since any solution of the Held-Karp relaxation
polytope is a feasible solution to \lpacss\ for $k=1$, $x^*$ is also a feasible solution to \lpacss. Furthermore, since the sum of the cost of the arcs of $G_d$ is $d$, i.e., $c(A(G_d))=d$, we have
$\sum_a c(a)x^*_a = d/2$.  Hence, the optimal value of LP is at most $d/2$.
\begin{lemma}[Charikar et al.~\cite{CGK06}]
For $k=1$, the optimum value of \lpacss~for the graph $G_d$ is at most $d/2$.
\end{lemma}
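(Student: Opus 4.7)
The plan is to exhibit an LP-feasible vector of value exactly $d/2$, from which $\text{LP}(G_d) \leq d/2$ follows immediately. I would try the uniform half-integral $x^*_a = 1/2$ for every arc $a \in A(G_d)$, then check both the objective value and the cut constraints.

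The objective value is straightforward from the cost normalization: at each level $l \in \{1,\ldots,d\}$ the number of arcs is $2(r+1) r^{d-l}$ and the cost per arc is $c_d(l) = 1/(2(r+1) r^{d-l})$, so the total arc cost at each level is exactly $1$. Summing, $c(A(G_d)) = d$ and hence $\sum_a c_a x^*_a = \tfrac{1}{2} \cdot d = d/2$. For feasibility -- i.e., $x^*(\delta^+(U)) \geq 1$ for every non-trivial $U \subset V(G_d)$ -- it suffices to show $|\delta^+(U)| \geq 2$ for every such cut. This is where I would appeal to \cite{CGK06}: they show that $x^* = 1/2$ lies in the Held-Karp polytope of (essentially) this recursive graph. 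Since the Held-Karp polytope is a strengthening of the (LP-ACSS) polytope for $k=1$ (it adds degree equalities on top of the cut inequalities $x(\delta^+(U)) \geq 1$), Held-Karp feasibility implies (LP-ACSS) feasibility. Combining with the cost calculation yields the lemma.

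The main obstacle is the feasibility step -- directly verifying $|\delta^+(U)| \geq 2$ for every non-trivial $U$, without quoting \cite{CGK06}. I would attack it by induction on the depth $d$. The base case $d=1$ is immediate, because $G_1$ is a bidirectional cycle (since $s = t$) and every non-trivial cut has at least two crossings. For the inductive step, I would case-analyze how $U$ interacts with the top-level $u$-path, top-level $v$-path, and each of the $r$ column subgraphs $G(d-1, u_i, v_i)$: when $U$ (after identifying shared port vertices) is contained in a single column, the inductive hypothesis applied to that column gives two crossings; when $U$ meets the top-level structure non-trivially, the two top-level paths (closed into cycles through $s = t$) each contribute a crossing arc. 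The most delicate case is when $U$ mixes top-level vertices with parts of several columns, where the column-level and top-level contributions have to be combined carefully to guarantee the two required crossings.
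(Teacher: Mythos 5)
Your proposal matches the paper's proof exactly: both define $x^*_a = 1/2$ for all arcs, invoke Charikar et al.\ to conclude $x^*$ lies in the Held-Karp polytope (and hence is feasible for (LP-ACSS) with $k=1$), and compute the objective as $c(A(G_d))/2 = d/2$ from the level-by-level cost normalization. The additional inductive sketch you give for verifying $|\delta^+(U)| \geq 2$ directly is a reasonable self-contained alternative, but the paper simply cites \cite{CGK06} and does not carry it out.
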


For any $d>0$, let $H_d$ be the minimum cost strongly connected subgraph of
 $G_d$, and $T(d):=c(A(H_d))$ be the cost of $H_d$. In the rest of the section, we prove the following lemma:
 \begin{lemma}
 \label{lem:optlowerbound}
 For all $d>0$,
 \begin{equation}
 \label{eq:optlowerbound}
 T(d) \geq \frac{3d - 1}{4} - \frac{3d}{r}.
 \end{equation}
 \end{lemma}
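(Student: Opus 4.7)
The plan is to prove the lemma by induction on $d$, aiming to establish the recursive inequality $T(d) \geq T(d-1) + \tfrac{3}{4} - \tfrac{3}{r}$, which together with the base case telescopes to the claimed bound. It may be necessary to prove a slightly strengthened joint statement that simultaneously bounds the min-strongly-connected cost of $G(d,s,t)$ with $s \neq t$, since columns of $G_d$ have this form rather than the unified $s=t$ form. For the base case $d=1$, the graph $G_1$ has $r+1$ vertices (with $v_0 = v_{r+1} = s$) and two anti-parallel directed Hamiltonian cycles of $r+1$ arcs each, so any strongly connected spanning subgraph has cost at least $(r+1)\,c_1(1) = \tfrac{1}{2} \geq \tfrac{1}{2} - \tfrac{3}{r}$.

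For the inductive step, I would fix an optimal $H_d$ and decompose $T(d) = O + \sum_{i=1}^{r} C_i$, where $O$ is the cost of the outer (level-$d$) arcs of $H_d$ and $C_i$ is the cost of arcs of $H_d$ inside column $i$ (a copy of $G(d-1, u_i, v_i)$). Since every level-$l$ arc for $l<d$ has cost exactly $1/r$ of the corresponding level-$l$ arc in $G_{d-1}$, each $C_i$ equals $1/r$ times a subgraph cost in the $G_{d-1}$ cost scale, which is precisely the scale at which the inductive hypothesis applies.

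The main argument is a case analysis on outer arcs. These form two directed cycles, $s \to u_1 \to \cdots \to u_r \to s$ and $s \to v_r \to \cdots \to v_1 \to s$, and column $i$'s only external connections are through $u_i$ and $v_i$. Strong connectivity of $H_d$ therefore forces both a $u_i \to v_i$ and a $v_i \to u_i$ path in $H_d$ for each column, provided either internally (in which case $C_i$ is bounded below by induction) or externally, in which case certain outer arcs---notably $(u_i, u_{i+1})$ and $(v_{i+1}, v_i)$, together with a descending path through another column or through $s$---must be present, contributing to $O$. A careful charging of each missing internal direction against the outer arcs it forces should yield the desired $\tfrac{3}{4} - \tfrac{3}{r}$ increment beyond the $T(d-1)$ contribution from the columns.

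I expect the main obstacle to be the mixed case, in which several columns individually provide only partial internal connectivity and their external routings share outer arcs, so that a naive per-column charge overcounts. I anticipate handling this by classifying columns according to which internal direction is missing and using that the outer arcs forced on each side cover a large contiguous portion of the corresponding cycle; the $\tfrac{3}{4}$ constant then follows from an averaging argument over these forced contributions, with the $\tfrac{3}{r}$ slack absorbing the ends of the contiguous arcs and the boundary between the two cycles at $s$.
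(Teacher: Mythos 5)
The outline shares the paper's high-level skeleton (induction on $d$, decompose $T(d)$ into outer-arc cost plus per-column cost, do a case analysis on how many level-$d$ arcs meet each column), but the target recurrence you are aiming for is wrong, and this is not a cosmetic issue. The paper's recurrence is the two-step
\[
T(d) \;\geq\; \min\bigl\{\tfrac34 + T(d-1),\; \tfrac32 + T(d-2)\bigr\} - \tfrac3r,
\]
not the one-step $T(d) \geq T(d-1) + \tfrac34 - \tfrac3r$ you propose, and the two-step form is exactly what makes the bound nontrivial.

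Here is where your plan breaks. Consider the adversarial $H_d$ in which one full outer cycle (say all $v$-side arcs) is present and the entire $u$-side is absent, so every column $i$ is adjacent to only two level-$d$ arcs, both at $v_i$. Then the outer cost is only $1/2$, and each column must be internally strongly connected (since $u_i$ has no outer arcs). If you then bound each column's cost simply by ``strongly connected, hence $\geq T(d-1)/r$ by induction,'' you get $T(d) \geq \tfrac12 + T(d-1)$, which telescopes to $T(d) \geq d/2$ --- integrality gap $1$, i.e.\ no gap at all. The paper avoids this by observing that such an isolated column is not merely strongly connected: because the unique $u_i$--$v_i$ and $v_i$--$u_i$ paths inside the column go along the chain of level-$(d-1)$ arcs, \emph{all} of those arcs are forced into $H_d$ (an extra full unit of cost, i.e.\ $2(r+1)c_d(d-1)=1/r$ per column), and each sub-column is then itself strongly connected, contributing $T(d-2)/r$ each. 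This gives the $\tfrac32 + T(d-2)$ branch of the recurrence. Your plan does not see this extra structure, and neither the vaguely-described charging argument nor the undeveloped ``strengthened statement for $G(d,s,t)$ with $s\neq t$'' supplies it. (The paper also does not need a separate $s\neq t$ statement: for columns with outer arcs on both sides it simply identifies $u_i$ with $v_i$ and applies $T(d-1)$ directly; the separate treatment is needed only for the isolated Case-3 columns, and there the resolution is the jump to $T(d-2)$, not a stronger $T'(d-1)$.) One more small point: your $d=1$ base case uses one directed Hamiltonian cycle, giving $1/2$, but to make the two-step induction close you also need $T(0)=0$, which the paper notes.
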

Let $H^{(i)}_d:=H_d \cap G^{(i)}_d$ be the $i^{th}$ column of $H_d$.
Observe that $H_d^{(i)}$ can be
incident to (at most) four arcs of the $d^{th}$ level arcs of $H_d$. Let
$$A_d(i):=\left\{(v_i,v_{i-1}),(v_{i+1},v_i),(u_{i-1},u_i),(u_i,u_{i+1})\right\} \cap A(H_d),$$
be the set of those arcs. 
We can lower-bound $c(A(H^{(i)}_d))$ based on the number of arcs that is incident to $H^{(i)}_d$ (note that since $H_d$ is strongly connected, $|A_d(i)| \geq 2$):

\begin{description}

\item[Case 1:] $|A_d(i)|\geq 3$\\
  In this case, we must have
  \begin{equation}
  \label{eq:case1bound}
  c(A(H^{(i)}_d)) \geq T(d-1)/r.
  \end{equation}
  The inequality essentially follows from the fact that $H^{(i)}_d$ is a strongly connected subgraph of $G_{d-1}$. This is because the remaining arcs of the graph, $H_d\setminus H^{(i)}_d$, can only connect (or unify)
  the source and sink of $H^{(i)}_d$, i.e., $u_i$ and $v_i$.
  The $1/r$ factor follows from the fact that the cost of each arc of $G_{d-1}$ is $r$ times the corresponding arc in $G^{(i)}_d$.
%  In other words, if the cost of the edges of $G_{d-1}$ is scaled down by $1/r$, then the cost
%  can have $v_i,u_i$-path or $u_i,v_i$-path via arcs
%  not in $H^{(i)}$.
%  In the best case, these paths have the same effect as unifying $v_i$
%  and $u_i$ which are the source and sink of $G^{(i)}_{d-1}$.
%  Thus, $c(H^{(i)})\geq c(\OPT(G_{d-1}))$.

\medskip

\item[Case 2:] $|A_d(i)|=2$, and each of $u_i$ and $v_i$ is incident to exactly one arc of $A_d(i)$\\
  Similar to the previous case, here we have
  \begin{equation}
  \label{eq:case2bound}
  c(A(H^{(i)}_d)) \geq T(d-1)/r.
  \end{equation}
%  In this case, we can have $v_i,u_i$-path or $u_i,v_i$-path via arcs
%  not in $H^{(i)}$.
%  In the best case, these paths have the same effect as unifying $v_i$
%  and $u_i$ which are the source and sink of $G^{(i)}_{d-1}$.
%  Thus, $c(H^{(i)})\geq c(\OPT(G_{d-1}))$.
	As we will see in Lemma \ref{lem:case2column}, at most two columns of $H_d$ may satisfy this case. Therefore, although we have the worse lower-bound on $c(H^{(i)}_d)$ in this case, it has an insignificant effect on the final lower-bound.

\medskip

\item[Case 3:] $|A_d(i)|=2$, and one of $u_i$ or $v_i$ is incident to none of the arcs of $A_d(i)$\\
  Here we obtain a better lower-bound. For $1\leq j\leq r$, let $H^{(i;j)}_d$ be the $j^{th}$ column of $H^{(i)}_d$ with source $u_{i,j}$ and sink $v_{i,j}$.
  It follows that the only $u_i,v_i$ (or $v_i, u_i$) path in
  $H_d$ is the one that is made by the  $d-1$ level arcs connecting the columns of $H^{(i)}_d$, i.e., $u_i, u_{i,1}, u_{i,2}, \ldots, u_{i,r}, v_i$ (resp. $v_i, v_{i,r}, v_{i,r-1},\ldots,v_{i,1}, u_i$).
  %must use some of the arcs of $H^{(i)}_d$, and the only way to go
  %between columns of $H^{(i)}$ is to use the $(d-1)$th level arcs.
 Therefore, $H^{(i)}_d$ must contain all of the $(d-1)^{th}$ level arcs of
  $G^{(i)}_{d}$.
  Since each   column of $H^{(i)}_d$ is incident to 4 arcs of level $(d-1)^{th}$, by repeated application of case 1,
  we obtain
  \begin{eqnarray}
  c(A(H^{(i)}_d))&\geq& 2(r+1)c_{d}(d-1) + \sum_{j=1}^r c(A(H^{(i;j)}_d)) \nonumber \\%c\left(A(G_{d-1})\setminus\bigcup_{l=1}^rA(G_{d-2}^{(l)})\right) + \frac{r\cdot T(d-2)}{r^2} \nonumber\\
  &=& 2(r+1)c_{d}(d-1) + \frac{T(d-2)}{r}.\label{eq:case3bound}
\end{eqnarray}
\end{description}
%that it is not possible that $G_{d-1}^{(i)}$ is incident to
%less than two arcs of the $d$th level arcs in $\OPT(G_d)$ because of
%the indegree and outdegree constraints of $V(G_{d-1}^{(i)})$.
%The above three cases cover all the possible cases.
Next, we show that there are at most $2$ columns satisfying the second case.

\begin{figure}
\begin{center}
\includegraphics[width=4.8in]{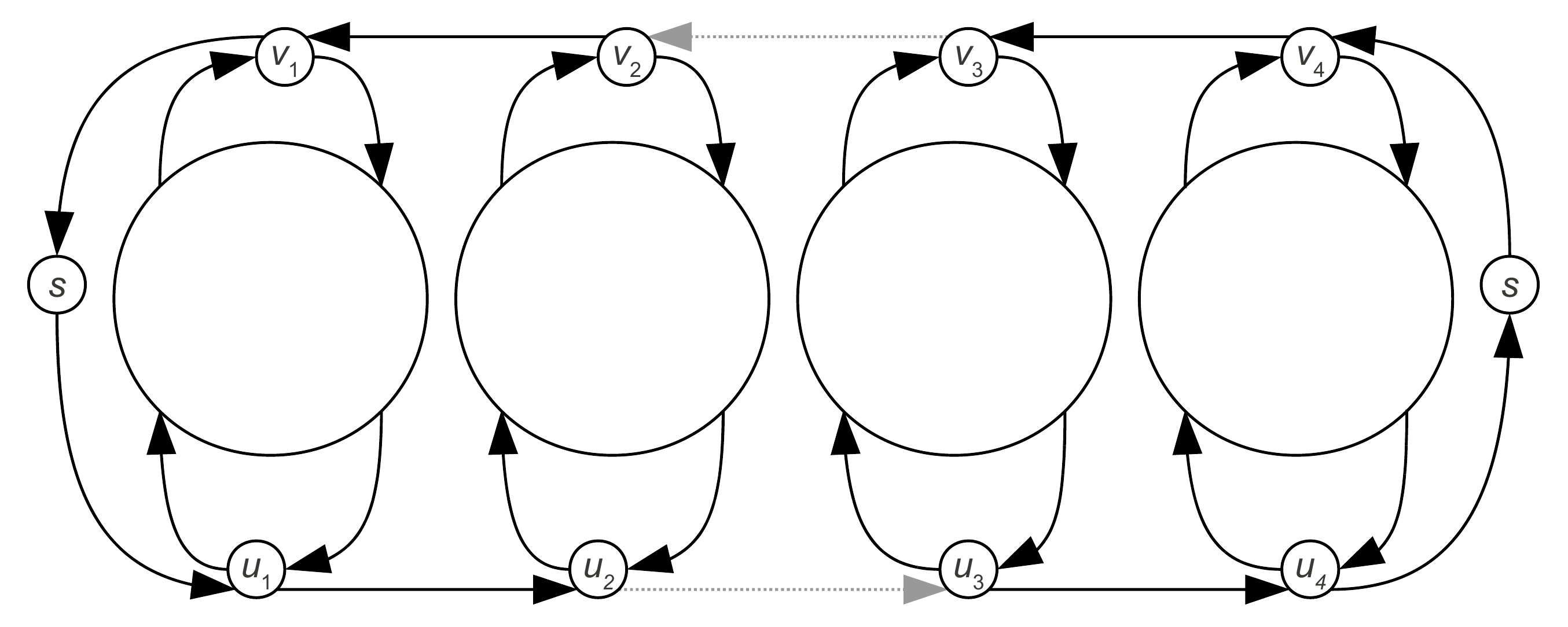}
\end{center}
\label{fig:case2}
\caption{An illustration of $H_d$ where the second column satisfies Case~2. 
The black arcs represent the arcs of $H_d$, and grey arcs represent the removed arcs.
Observe that every arc at level $d$ is a min-cut of $H_d$.
 }
\end{figure}

\begin{lemma}
\label{lem:case2column}
At most two columns of $H_d$ satisfy the second case.
\label{lmm:atmost-two}
\end{lemma}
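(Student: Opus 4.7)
The plan is to first restrict Case~2 to just two admissible sub-configurations of $A_d(i)$, and then use a cut argument to rule out any pair of columns falling into the same sub-configuration, which gives the bound of two.

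For the first step, I would enumerate the four possible patterns for $A_d(i)$ under Case~2 according to whether the unique arc at $v_i$ is incoming or outgoing and, independently, whether the unique arc at $u_i$ is incoming or outgoing. Two of these patterns put both level-$d$ arcs going out of $H_d^{(i)}$, or both going in. Since by the recursive construction of $G_d$ every non-$d^{th}$-level arc lies inside a single column, such a pattern leaves $H_d^{(i)}$ with no level-$d$ in-arc (or no level-$d$ out-arc) from the rest of $H_d$, and then no directed path can enter (respectively leave) the column, contradicting strong connectivity. The two surviving sub-configurations are
\[
A_d(i)=\{(v_i,v_{i-1}),(u_{i-1},u_i)\}\quad\text{(Type A)}\qquad\text{or}\qquad A_d(i)=\{(v_{i+1},v_i),(u_i,u_{i+1})\}\quad\text{(Type B)}.
\]

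The main step is a cut argument. Assume two distinct columns $i<j$ are both of Type~A, and take $S$ to be the union of the vertex sets of columns $i+1,i+2,\ldots,j$. Because $S$ is a union of whole columns and only $d^{th}$-level arcs connect distinct columns (or connect a column to $s$), the only arcs of $H_d$ that could cross $S$ are the four level-$d$ arcs $(v_{i+1},v_i)$, $(u_i,u_{i+1})$, $(v_{j+1},v_j)$, and $(u_j,u_{j+1})$. Type~A at column $i$ excludes the first two from $H_d$, and Type~A at column $j$ excludes the last two, so no arc of $H_d$ crosses $S$---contradicting strong connectivity. Two columns both of Type~B are ruled out analogously, with the cut taken to be the union of the vertex sets of columns $i,i+1,\ldots,j-1$; the only possible crossing arcs are then $(v_i,v_{i-1})$, $(u_{i-1},u_i)$, $(v_j,v_{j-1})$, $(u_{j-1},u_j)$, and all four are killed by Type~B at the two columns.

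Combining the two steps yields at most one column of Type~A and at most one of Type~B in $H_d$, hence at most two columns in Case~2 as claimed. The main thing to verify carefully will be the identification of exactly which arcs can cross the cut $S$, but this drops out at once from the recursive structure: $S$ is built from whole columns and every arc at a level below $d$ stays inside a single column. Boundary values such as $i=1$ or $j=r$ need no extra treatment because the $s$-endpoints of the relevant level-$d$ arcs are already captured by the four-arc list.
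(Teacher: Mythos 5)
Your proof is correct, and it takes a somewhat different (and arguably more explicit) route than the paper's. The paper observes that if any column $i$ satisfies Case~2 then the corresponding pair of level-$d$ arcs is missing, and argues that every other level-$d$ arc must then be a ``min-cut'' (a bridge) of $H_d$, hence present; so all columns other than the two adjacent to the missing pair are adjacent to all four of their level-$d$ arcs and thus fall into Case~1. This is slick but leans on a figure for the bridge claim. You instead split Case~2 into the two consistent sub-patterns (Type~A with the out-arc at $v_i$ and in-arc at $u_i$, Type~B with the reverse), ruling out the other two sub-patterns since they would leave the column with no in- or no out-arc, and then show by a direct cut argument that two columns of the same type are impossible: take $S$ to be the union of the intervening whole columns and observe that both type assumptions kill all four candidate crossing arcs. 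Your handling of the boundary cases ($j=i+1$, $i=1$, $j=r$) is also sound, since $S$ is always a proper nonempty subset that excludes $s$, and the four-arc list automatically absorbs the $s$-endpoints. The two arguments share the same underlying cut idea; yours makes explicit which cut fails and why, while the paper's packs this into the bridge observation and gets a little extra information (that in fact all other columns are in Case~1, not just not in Case~2).
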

\begin{proof}
The proof is a simple case analysis argument.
First, observe that there exists a column satisfying the second case in $H_d$ if and only if $(v_{i},v_{i-1}), (u_{i-1},u_{i}) \notin H_d$ for some $1\leq i\leq r+1$.
Now, suppose this is the case.
It then follows that $H_d$ must contain all arcs at level $d$ except these two arcs because each of the other arcs is a min-cut of $H_d$.
See Figure~\ref{fig:case2}.
Therefore, all except (at most) two of the columns of $H_d$ are adjacent to exactly 4 arcs at level $d$.
%If $1 < i < r+1$, then both of the columns $H^{(i-1)}_d$ and $H^{(i)}_d$ satisfy the second case, otherwise only one of the columns satisfy it.
%First, observe that if the $i$th column $H^{(i)}$ of $\OPT(G_d)$ is of
%Case~2, then either
%\begin{itemize}
%\item[(i)] $(v_i,v_{i-1})$ and $(u_{i-1},u_i)$ are missing in
%  $\OPT(G_d)$ or
%\item[(ii)] $(v_{i+1},v_i)$ and $(u_i,u_{i+1})$ are missing in
%  $\OPT(G_d)$.
%\end{itemize}
%In both cases, a column adjacent to $H^{(i)}$ is also of Case~2.
%Note that we let $H^{(0)}=H^{(r+1)}=s$.
%Suppose we have more than $2$ columns of Case~2.
%Then we may assume that the first three columns are column $i$,
%$(i+1)$ and $l$, where the $i$th column is of Case~(i), the $(i+1)$th
%column is of Case~(ii) and the $l$th column is of Case~(i).
%Let $U=\bigcup_{j=i+1}^lH^{(j)}$.
%Then all arcs connecting $U$ to other columns are missing.
%Thus, $U$ is not connected to $\OPT(G_d)\setminus U$, a contradiction
%since $\OPT(G_d)$ is strongly connected.
\end{proof}

Now we are ready to prove Lemma \ref{lem:optlowerbound}.\\

\begin{proofof}{Lemma \ref{lem:optlowerbound}}
We prove by induction. First observe that  $T(0)=0$ and $T(1)=1/2$ satisfying \eqref{eq:optlowerbound}.
Let $N_1, N_2, (r-N_1-N_2)$ be the number of columns satisfying case 1, 2, 3, respectively. We  divide the cost of each arc at level $d$ equally between the columns incident to it.
This incurs a cost of $3 c_d(d)/2$ to the columns satisfying case 1, $c_d(d)$ to the rest of the columns and at least $c_d(d)$ to the source vertex $s$ (note that $s$ is adjacent to
at least two arcs at level $d$).
Using equations \eqref{eq:case1bound}, \eqref{eq:case2bound}, \eqref{eq:case3bound} we get:
%Then, as discussed above, for $d\geq 2$, we have
\begin{eqnarray*}
T(d) &\geq & c_d(d) + \min_{0\leq N_1,N_2\leq r} \left\{ N_1\left(\frac{3c_d(d)}{2}+\frac{T(d-1)}{r}\right)  +N_2 \left(c_d(d) + \frac{T(d-1)}{r}\right) \right. \\
      &&\left.~~~~~~+(r-N_1-N_2) \left( c_d(d) + 2(r+1) c_{d}(d-1) + \frac{T(d-2)}{r} \right)\right\} \\
      &\geq & \min_{0\leq N\leq r} \left\{N \left(\frac{3 c_d(d)}{2}+ \frac{T(d-1)}{r}\right) \right.\\
 & & \left. ~~~~~~~ +(r-N)\left(c_d(d) + 2(r+1) c_{d}(d-1)+ \frac{T(d-2)}{r}\right)\right\}\\
      &\geq & \min_{0\leq \alpha \leq 1} \left\{\alpha \left(\frac{3 r}{4(r+1)}+ T(d-1)\right) +(1-\alpha)\left(\frac{3r}{2(r+1)} + T(d-2)\right)\right\}\\
      &\geq & \min\left\{ 3/4 + T(d-1), 3/2 + T(d-2)\right\} - 3/r.
\end{eqnarray*}
The second inequality follows from the fact that $N_2\leq 2$.
The third inequality follows from equation \eqref{eq:costfunction}, and the last one follows from a simple algebra.

Now, we may apply the induction hypothesis to $T(d-1)$ and $T(d-2)$. We get
\begin{eqnarray*}
T(d) &\geq &\min\left\{ \frac34 + \frac{3(d-1)-1}{4} - \frac{3(d-1)}{r}, \frac32 + \frac{3(d-2)-1}{4} - \frac{3(d-2)}{r}\right\} - \frac{3}{r} \\
 &\geq & \frac{3d-1}{4} - \frac{3d}{r},
\end{eqnarray*}
which completes the proof.
\end{proofof}

\medskip

This completes the proof of Theorem \ref{thm:gapexample}.

\section{Conclusion}

We  presented a simple $(1+1/k)$-approximation algorithm based on the rounding by sampling method for the minimum size $k$-arc connected subgraph problem.  Unlike recent applications of the rounding by sampling method \cite{AsadpourGMGS10,GharanSS11}, our algorithm has a flavor of the {\em iterated rounding
method} \cite{Jain01} in its particular use of the extreme point solutions.
The main open problem is to find a better than factor  $2$-approximation algorithm for the minimum cost strongly connected subgraph problem.

We also showed that  the integrality gap of the minimum cost strongly connected subgraph problem is at least $1.5-\eps$, for any $\eps>0$.
This leaves an interesting open question whether the lower bound of $1+\Omega(1/k)$ is achievable for the minimum size $k$-arc connected subgraph problem as well.

\bigskip
\noindent{\bf Acknowledgments:}
We thank Joseph Cheriyan for useful discussions on the preliminary construction of the integrality-gap instance.

\bibliographystyle{alpha}
\bibliography{kacss}

\end{document}